\newcommand{\xv}{{\boldsymbol {x}}}
\newcommand{\vv}{{\boldsymbol {v}}}
\newcommand{\va}{{\boldsymbol {a}}}
\newcommand{\vs}{{\boldsymbol {s}}}
\newcommand{\VM}{{\it VM}}
\newcommand{\CV}{{\it CV}}
\newcommand{\UB}{{\it UB}}
\newcommand{\M}{{\mathcal{M}}}
\newcommand{\PD}{{\it PD}}
\DeclareMathOperator*{\argmin}{\arg\min}
\DeclareMathOperator*{\atan}{atan}
\begin{document}
\frontmatter          
\pagestyle{headings}  

%
\mainmatter 

\title{Adaptive Neighborhood Resizing for \\Stochastic Reachability in Multi-Agent Systems}

\titlerunning{Adaptive Neighborhood Resizing}

\author{Anna~Lukina\inst{1} \and Ashish~Tiwari\inst{3} \and Scott.~A.~Smolka\inst{2} \and Radu~Grosu\inst{1,2}}

\authorrunning{Lukina, Tiwati, Smolka, and Grosu}

\institute{Cyber-Physical Systems Group, Technische Universit\"at Wien, Vienna, Austria\\
\and Department of Computer Science, Stony Brook University, New York, USA\\
\and Microsoft, USA\\
\email{anna.lukina@tuwien.ac.at}}

\maketitle

\begin{abstract}
We present $\distributedampc$, a distributed, adaptive-horizon and a\-dap\-tive-neighborhood algorithm for solving the stochastic reachability problem in multi-agent systems, in particular flocking modeled as a Markov decision process. 
At each time step, every agent calls a centralized, adaptive-horizon model-predictive control ($\ampc$) algorithm~\cite{attackingV} to obtain an optimal solution for its local neighborhood. Second, the agents derive the flock-wide optimal solution through a sequence of consensus rounds. Third, the neighborhood is adaptively resized using a flock-wide, cost-based Lyapunov function $V$. This way $\dampc$ improves efficiency without compromising convergence. 
We evaluate $\distributedampc$'s performance using statistical model checking. Our results demonstrate that, compared to $\ampc$, $\distributedampc$ achieves considerable speed-up (two-fold in some cases) with only a slightly lower rate of convergence. 
The smaller average neighborhood size and lookahead horizon demonstrate the benefits of the $\distributedampc$ approach for stochastic reachability problems involving any controllable multi-agent system that possesses a cost function.
\end{abstract}

\section{Introduction}
\label{sec:intro}
V-formation in a flock of birds is a quintessential example of emergent behavior in a stochastic multi-agent system. V-formation brings numerous benefits to the flock. It is primarily known for being energy-efficient due to the upwash benefit a bird in the flock enjoys from its frontal neighbor. In addition, it offers each bird a clear frontal view, unobstructed by any flockmate.  
Moreover, its collective spatial flock mass can be intimidating to potential predators. It is therefore not surprising that interest in V-formation is on the rise in the aircraft industry~\cite{bloomberg}.

Recent work on V-formation has shown that the problem can be viewed as one of optimal control, model-predictive control (MPC) in particular. In~\cite{attackingV}, we introduced adaptive-horizon MPC ($\ampc$), a highly effective control algorithm for multi-agent cyber-physical systems (CPS) modeled as a Markov decision process (MDP). Traditional MPC uses a fixed prediction horizon, i.e. number of steps to compute ahead, to determine the optimal, cost-minimizing control action. The downside of the fixed look-ahead is that the algorithm may get stuck in a local minimum.  For a controllable MDP, $\ampc$ chooses its prediction horizon dynamically, extending it out into the future until the cost function (shown in blue in Fig.~\ref{fig:example_local}) decreases sufficiently.  This implicitly endows $\ampc$ with a Lyapunov function (shown in red in Fig.~\ref{fig:example_local}), providing statistical guarantees of convergence to a goal state such as V-formation, even in the presence of adversarial agents. It should be noted that $\ampc$ works in a centralized manner, with global knowledge of the state of the flock at its disposal.

\begin{figure}[t]
\centering
\includegraphics[width=.5\textwidth]{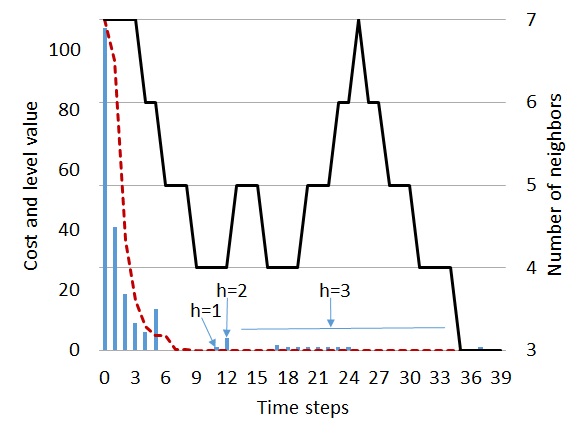}\includegraphics[width=.5\textwidth]{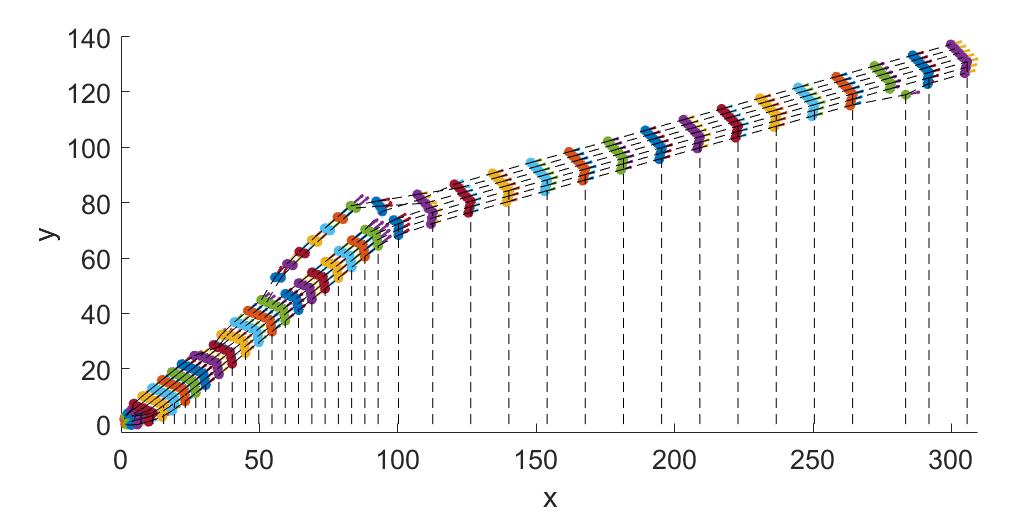}
\vspace*{-4ex}
\caption{Left: Blue bars are the values of the cost function in every time step. Red dashed line is the cost-based Lyapunov function used for horizon and neighborhood adaptation. Black solid line is neighborhood resizing for the next step given the current cost. Right: Step-by-step evolution of the flock of seven birds bringing two separate formations together. Each color-slice is a configuration of the birds at a particular time step.}
\label{fig:example_local}
\vspace*{-4ex}
\end{figure}

This paper introduces $\dampc$, a distributed version of $\ampc$ that extends it along several dimensions. First, at every time step, $\dampc$ runs a \emph{distributed consensus algorithm} to determine the optimal action (acceleration) for every agent in the flock. In particular, each agent $i$ starts by computing the optimal actions for its local subflock. The subflocks then communicate in a sequence of consensus rounds to determine the optimal actions for the entire flock.
Secondly, $\dampc$ features \emph{adaptive neighborhood resizing} (black line in Fig.~\ref{fig:example_local}) in an effort to further improve the algorithm's efficiency. In a similar way as for the prediction horizon in $\ampc$, neighborhood resizing utilizes the implicit Lyapunov function to guarantee eventual convergence to a minimum neighborhood size. $\distributedampc$ thus treats the neighborhood size as another controllable variable that can be dynamically adjusted for efficiency purposes. This leads to reduced communication and computation compared to the centralized solution, without sacrificing statistical guarantees of convergence like those offered by its centralized counterpart $\ampc$.

The proof of statistical global convergence is intricate.  For example, consider the scenario shown in Fig.~\ref{fig:example_local}. $\distributedampc$ is decreasing the neighborhood size $k$ for all agents, as the system-wide cost function $J$ follows a decreasing trajectory. Suddenly and without warning, the flock begins to split into two, undoubtedly owing to an unsuitably low value of $k$, leading to an abrupt upward turn in $J$. $\distributedampc$ reacts accordingly and promptly, increasing its prediction horizon first and then $k$, until system stability is restored. The ability for $\distributedampc$ to do this is guaranteed, for in the worst case $k$ will be increased to $B$, the total number of birds in the flock. It can then again attempt to monotonically decrease $k$, but this time starting from a lower value of $J$, until V-formation is reached.

A smoother convergence scenario is shown in Fig.~\ref{fig:example}. In this case, the  efficiency gains of adaptive neighborhood resizing are more evident, as the cost function $J$ follows an almost purely monotonically decreasing trajectory. A formal proof of global convergence of $\distributedampc$ with high probability is given in the body of the paper, and represents one of the paper's main results.

Apart from the novel adaptive-horizon adaptive-neighborhood distributed algorithm to synthesize a controller, and its verification using statistical model checking, we believe the work here is significant in a deeper way. The problem of synthesizing a sequence of control actions to drive a system to a desired state can be also viewed as a falsification problem, where one tries to find values for (adversarial) inputs that steer the system to a bad state. 

These problems can be cast as constraint satisfaction problems, or as optimization problems. As in case of V-formation, one has to deal with non-convexity, and popular techniques, such as convex optimization, will not work. Our approach can be seen as a tool for solving such highly nonlinear optimization problems that encode systems with notions of time steps and spatially distributed agents. Our work demonstrates that a solution can be found efficiently by adaptively varying the time horizon and the spatial neighborhood. A main benefit of the adaptive scheme, apart from efficiency, is that it gives a path towards completeness. By allowing adaptation to consider longer time horizons, and larger neighborhoods (possibly the entire flock), one can provide convergence guarantees that would be otherwise impossible (say, in a fixed-horizon MPC).

The rest of the paper is organized as follows. Section~\ref{sec:related} discusses related work. 
Section~\ref{sec:v-form} describes the flocking model and the cost function. Section~\ref{sec:reach} defines the stochastic reachability problem. Section~\ref{sec:dist_flock} introduces $\distributedampc$, our main contribution. Section~\ref{sec:convergence} provides proofs of our theoretical results. Section~\ref{sec:results} presents our statistical evaluation of the algorithm. Section~\ref{sec:conclusion} offers concluding remarks and indicates directions for future research.

\section{Related Work} 
\label{sec:related}


In~\cite{lukina_tacas17}, the ARES algorithm generates plans incrementally, segment by segment, using adaptive horizons (AH) to find the next best step towards the global optimum. In particular, ARES calls a collection of particle swarm optimizers (\texttt{PSO}), each with its own AH. \texttt{PSO}s with the best results are cloned, while the others are restarted from the current level on the way to the goal (similar to importance splitting~\cite{kahn1951estimation}). When the V-formation is achieved, the complete plan is composed of the best segments. The AHs are chosen such that the best \texttt{PSO}s can succeed to decrease the objective cost by at least a pre-defined value, implicitly defining a Lyapunov function that guarantees global convergence.

The presence of an adversary able to disturb the state of the system at every time step (e.g., remove a bird) is investigated in~\cite{attackingV}. In this case, planning is not sufficient. Instead, a controller ($\ampc$) finding the best accelerations at every step has to be designed. This calls only one \texttt{PSO} with a given horizon and uses the first flock-wide accelerations (from the sequence returned by \texttt{PSO}) as the next actions. Under the assumption that the flock is controllable, even if the returned accelerations are not the optimal (in terms of the entire run), $\ampc$ can correct for this in the future. AH is again picked based on the distance of the system to the goal states. This results in a global $\ampc$ that, unlike classical MPC (which uses a fixed horizon), is guaranteed to converge.

In~\cite{ZhanL13}, the problem of taking an arbitrary initial configuration of $n$ agents to a final configuration, where every pair of stationary ``neighbors'' is a fixed distance $d$ apart, is considered. They present centralized and distributed algorithms for this problem, both of which use MPC to determine the next action. The problem in~\cite{ZhanL13} is related to our work. However, we consider nonconvex and nonlinear cost functions, which require overcoming local minima to ensure convergence. In contrast,~\cite{ZhanL13} deals with convex functions, which do not suffer from problems introduced by the presence of multiple local minima. Furthermore, in the distributed control procedure of~\cite{ZhanL13}, each agent publishes the control value it locally computed, which is then used by other agents to calculate their own. A quadratic number of such steps is performed before each agent fixes its control input for the next time step. In our work, we limit this number to linear.

Other related work, including~\cite{Fowler02,Dandrea03,Ye17}, focuses on distributed controllers for flight formation that operate in an environment where the multi-agent system is already in the desired formation and the distributed controller's objective is to maintain formation in the presence of disturbances. A distinguishing feature of these approaches is the particular formation they are seeking to maintain, including a half-vee~\cite{Fowler02}, a ring and a torus~\cite{Dandrea03}, and a leader-follower formation~\cite{Ye17}. These works are specialized for capturing the dynamics of moving-wing aircraft. In contrast, we use $\dampc$ with dynamic neighborhood resizing to bring a flock from a random initial configuration to a stable V-formation.

Although $\distributedampc$ uses global consensus, our main focus is on adaptive neighborhood resizing and global convergence, and not on fault tolerance~\cite{su2016fault,dolev1986reaching}.

\section{Background on V-Formation}
\label{sec:v-form}

\paragraph{\rm\textbf{Dynamical model.}}In the flocking model used, the state of each bird is given by four variables: a 2-dimensional vector $\xv$ denoting the position of the bird in 2D continuous space, and a 2-dimensional vector $\vv$ denoting the velocity of the bird. We use $\vs=\{\xv_i, \vv_i\}_{i=1}^B$ to denote a state of a flock with $B$ birds. The \emph{control actions} of each bird are 2-dimensional accelerations $\va$.

Let $\xv_i(t)$, $\vv_i(t)$, and $\va_i(t)$ 
denote the position, velocity, and acceleration, 
of $i$-th bird at time $t$, $i\,{\in}\,\{1,\ldots,B\}$, respectively. Given an initial configuration $\xv_i(0) = \xv_i^0, \vv_i(0)=\vv_i^0$ inside a bounding box of a given size, the discrete-time behavior of bird $i$ is given by Eq.~\ref{eq:trans}:
\begin{eqnarray}
\label{eq:trans}
 \vv_i(t+1) &=& \vv_i(t) + \va_i(t),\label{eq:v} \qquad  \nonumber\\
 \xv_i(t + 1) &=& \xv_i(t) + \vv_i(t).\label{eq:x} 
\end{eqnarray}

In the simulations described in Section~\ref{sec:results}, the time interval between two successive updates of the positions and velocities of all birds in the flock equals one. The initial state is generated uniformly at random inside a bounding box. The accelerations are the output of the particle swarm optimization (\texttt{PSO}) algorithm~\cite{Kennedy95particleswarm}, which samples particles uniformly at random subject to the following constraints on the maximum velocities and accelerations: $\forall$ $i\,{\in}\,\{1,\ldots,B\}$ $||\vv_i(t)||\,{\leqslant}\,\vv_{max}$, $||\va_i(t)||\,{\leqslant}\,\rho||\vv_i(t)||$, where $\vv_{max}$ is a constant and $\rho\,{\in}\,(0,1)$. 

Thus, $\va$ introduces uncertainty in our model. We chose this model for its simplicity, as the cost function described in Section \ref{sec:fitness}, is nonlinear, nonconvex, nondifferentiable, and therefore sufficient enough to make reachability analysis extremely challenging. We propose an approximate algorithm to tackle reachability. In principle, more sophisticated models of flocking dynamics can be considered, but we leave those for future work, and focus on the simplest one.

The problem of bringing a flock from an arbitrary configuration to a V-formation can be posed as a reachability question, where the goal is the set of states representing a V-formation. A key assumption is that the reachability goal can be specified as $J(\vs) \leqslant \varphi$, where $J$ is a cost function that assigns a nonnegative real value to each state $\vs$, and $\varphi$ is a small positive constant.

\paragraph{\rm\textbf{Cost Function.}}
\label{sec:fitness}
In order to define the cost function, we recall the definitions of the metrics determining the cost of a state from~\cite{lovethyneighbor16} (see Appendix for details).
\begin{itemize}
\item \emph{Clear View:} $\CV(\vs)$ is defined by accumulating the percentage of a cone with angle $\theta$, blocked by other birds. The minimum value is $\CV^*{=}\,0$ and attained in a perfect V-formation where all birds have an unobstructed view.

\item \emph{Velocity Matching:} $\VM(\vs)$ for flock state $\vs$ is defined as the difference between the velocity of a given bird and all other birds, summed up over all birds in the flock. The minimum value is $\VM^*{=}\,0$ and attained in a perfect V-formation where all birds have the same velocity.

\item \emph{Upwash Benefit:} the trailing upwash is generated near the wingtips of a bird, while downwash is in the center of a bird. An upwash measure is defined on the 2D space using a Gaussian-like model that peaks at the appropriate upwash and downwash regions. $\UB(\vs)$ for flock state $s$ is the sum of $\UB_i$ for $1\leqslant i \leqslant B$. The upwash benefit $\UB(\vs)$ in V-formation is $\UB^*\,{=}\,1$, as all birds, except for the leader, have minimum upwash-benefit metric ($\UB_i=0$), while the leader has an upwash-benefit metric of $1$ ($\UB_i=1$).
\end{itemize}

Given the above metrics, the overall objective function $J$ is defined as a sum-of-squares of $\VM$, $\CV$, and $\UB$, as follows:
\begin{align}
\label{eq:cost}
J(\vs) = (\CV(\vs)-\CV^*)^2 + (\VM(\vs)-\VM^*)^2 +(\UB(\vs)-\UB^*)^2.
\end{align}
A state $\vs^{*}$ is considered to be a V-formation if $J(\vs^{*})\,{\leqslant}\,\varphi$, for a small positive $\varphi$. 

\section{The Stochastic Reachability Problem}
\label{sec:reach}

Given the stochasticity introduced by \texttt{PSO}, the V-formation problem can be formulated in terms of a reachability problem for a Markov Chain, induced by the composition of a Markov decision process (MDP) and a controller.  

\begin{definition}A \textbf{Markov decision process} (MDP) $\M=(S,A,T,J,I)$ is a 5-tuple consisting of a set of states $S$, a set of actions $A$, a transition function $T: S\,{\times}\,A\,{\times}\,S\,{\mapsto}\,[0,1]$, where $T(\vs,a,\vs^\prime)$ is the probability of transitioning from state $\vs$ to state $\vs'$ under action $\va$, a cost function $J:S\,{\mapsto}\,\mathbb{R}$, where $J(\vs)$ is the cost associated with state $\vs$, and an initial state distribution $I: S\,{\mapsto}\,[0,1]$.
\end{definition}

The \emph{MDP $\M$ modeling a flock} of $B$ birds is defined as follows. The set of states $S$ is $S = \mathbb{R}^{4B}$, as each bird has a $2$D position and a $2$D velocity vector, and the flock contains $B$ birds. The set of actions $A$ is $A = \mathbb{R}^{2B}$, as each bird takes a $2$D acceleration action and there are $B$ birds. The cost function  $J$ is defined by Eq.~\ref{eq:cost}. The transition function $T$ is defined by Eq.~\ref{eq:v}. As the acceleration vector $\va_i(t)$ for bird $i$ at time $t$ is a random variable, the state vector $(\xv_i(t+1)$, $\vv_i(t+1))$ is also a random variable. The initial state distribution $I$ is a uniform distribution from a region of state space where all birds have positions and velocities in a range defined by fixed lower and upper bounds.

Before we can define traces, or executions, of $\M$, we need to fix a controller, or strategy, that determines which action from $A$ to use at any given state of the system. We focus on randomized strategies. A \emph{randomized strategy} $\sigma$ over $\M$ is a function of the form $\sigma: S\,{\mapsto}\,\PD(A)$, where $\PD(A)$ is the set of probability distributions over $A$. That is, $\sigma$ takes a state $\vs$ and returns an action consistent with the probability distribution $\sigma(\vs)$. Once we fix a strategy for an MDP, we obtain a Markov chain. We refer to the underlying Markov chain induced by $\sigma$ over $\M$ as $\M_{\sigma}$. We use the terms strategy and controller interchangeably.

In the bird-flocking problem, a controller would be a function that determines the accelerations for all the birds given their current positions and velocities. Once we fix a controller, we can iteratively use it to (probabilistically) select a sequence of flock accelerations. The goal is to generate a sequence of actions that takes an MDP from an initial state $\vs$ to a state $\vs^{*}$ with $J(\vs^{*})\,{\leqslant}\,\varphi$. 


\begin{definition} Let $\M\,{=}\,(S,A,T,J,I)$ be an MDP, and let $G \subseteq S$ be the set of goal states $G\,{=}\,\{\vs | J(\vs)\,{\leqslant}\,\varphi\}$ of $\M$. Our \textbf{stochastic reachability problem} is to design a controller $\sigma: S\,{\mapsto}\,\PD(A)$ for $\M$ such that for a given $\delta$ probability of the underlying Markov chain $\M_\sigma$ to reach a state in $G$ in $m$ steps, for a given $m$, starting from an initial state, is at least $1-\delta$.
\end{definition}

We approach the stochastic reachability problem by designing a controller and quantifying its probability of success in reaching the goal states.
In~\cite{lukina_tacas17}, a stochastic reachability problem was solved by appropriately designing centralized controllers $\sigma$. In this paper, we design a distributed procedure with an adaptive horizon and adaptive neighborhood resizing and evaluate its performance.

\section{Adaptive-Neighborhood Distributed Control}
\label{sec:dist_flock}
In contrast to \cite{lukina_tacas17,attackingV}, we consider a distributed setting with the following assumptions about the system model. 
\begin{enumerate}\itemsep=0em
\item Each bird is equipped with the means for communication. The communication radius of each bird $i$ changes its size adaptively.
The measure of the radius is the number of birds covered and we refer to it as the bird's local neighborhood $N_i$, including the bird itself.
\item All birds use the same algorithm to satisfy their local reachability goals, i.e. to bring the local cost $J(\vs_{N_i})$, $i\,{\in}\,\{1,\ldots,B\}$, below the given threshold $\varphi$.
\item 
The birds move in continuous space and change accelerations synchronously at discrete time points.
\item After executing its local algorithms, each bird broadcasts the obtained solutions to its neighbors. This way every bird receives solution proposals, which differ due to the fact that each bird has its own local neighborhood. To find consensus, each bird takes as its best action the one with the minimal cost among the received proposals. The solutions for the birds in the considered neighborhood are then fixed. The consensus rounds repeat until all birds in the flock have fixed solutions.
\item Every time step the value of the cost function $J(\vs)$ is obtained globally for all birds in the flock and checked for improvement. The neighborhood for each bird is then resized based on this global check.
%
%
\vspace*{1mm}\item The upwash modeled in Section \ref{sec:fitness} maintains connectivity of the flock along the computations, while our algorithm manages collision avoidance.
\end{enumerate}

%
%

The main result of this paper is a \emph{distributed adaptive-neighborhood and adaptive-horizon model-predictive control algorithm} we call $\distributedampc$. At each time step, each bird runs \ampc to determine the best acceleration for itself and its neighbors (while ignoring the birds outside its neighborhood). 
The birds then exchange the computed accelerations with their neighbors, and the whole flock 
arrives at a consensus that assigns each bird to a unique (fixed) acceleration value.  
Before reaching consensus, it may be the case that some of $i$'s neighbors already have fixed solutions (accelerations) -- these accelerations are not updated when $i$ runs \ampc. A key idea of our algorithm is to adaptively resize the extent of a bird's neighborhood. 



\subsection{The Distributed AMPC Algorithm}
\label{sec:main_algo}
%
\begin{algorithm}[ht!]
\small
	\SetKwFunction{localampc}{LocalAMPC}  
    \SetKwFunction{cost}{cost}
    \SetKwFunction{J}{J}
    \SetKwFunction{fixed}{Fixed}
    \SetKwFunction{fix}{Fix}
    \SetKwFunction{neigh}{Neighbors}
    \SetKwFunction{neighsize}{NeighSize}
    
	\SetKwInOut{Input}{Input}
	\SetKwInOut{Output}{Output}
    \SetKwFor{ParFor}{for}{do in parallel}{endfor} 
    \DontPrintSemicolon
	\Input{$\M\,{=}\,(S,A,T,J,I),\varphi,{h}_{\mathit{max}},m,B, \beta$}
	\Output{$\vs_0$, $\va^m\,{=}\,[\va(t)]_{1\leqslant t\leqslant\,m}$} 
	\BlankLine
	$\vs_0\leftarrow\mathrm{sample}(I)$; $\vs\leftarrow\vs_0$; 
    $\ell_0\leftarrow J(\vs)$; $[\widehat{J}_i]_{i\in{B}}\leftarrow\inf$;  $t\leftarrow 1$; $k\leftarrow B$; \tcp*{Init}
	\BlankLine
	\While{($\ell_{t{-}1} > \varphi)$ $\land$ $(t < m)$}
	{
      $\overline{\va}(t)\leftarrow\textit{nfy}$; \tcp*{Initially no bird has a fixed solution}
      \While (\tcp*[f]{while not all birds have a fixed solution}) {$\lnot \fixed(\overline{\va}(t)) $} 
      {
        $R\leftarrow\{j \,|\, 1 \leqslant j \leqslant B \land \lnot \fixed(\overline{\va}_{j}(t))\}$;\;
        \tcp{Birds without a fixed solution run \localampc}
     	\ParFor{ $i \in R$ } 
        {
          $N_i \leftarrow \neigh(i, k)$; \tcp{Find $k$ nearest neighbors of $i$}
          $\Delta_i \leftarrow J(\vs_{N_i})/(m{-}t)$;\;
          $(\widehat{\vs}_{N_i},\tilde{\vs}_{N_i},\va_{N_i}^{h_i},\widehat{J}_i)\leftarrow\localampc(\M,\vs_{N_i},\overline{\va}_{N_i},\Delta_i,h_{max},\beta)$;
        }  
        $i^* \leftarrow \argmin_{i\in R}{\widehat{J}_i}$; \tcp{Find the bird with the best solution}    
        \ForAll{$b\in\neigh(i^*,k)$  \tcp{Fix $i^*$'s neighbors solution}}  
        {
          $\overline{\va}_b(t)\leftarrow\va_{N_{i^*}}^{h_i}[b]$; \tcp*{$\va^{h_i}_b(t)$ is the solution for bird $b$}
        }
      }
      $\va(t)\leftarrow\mathrm{first}(\overline{\va}(t))$; $\vs\leftarrow\widetilde{\vs}$;
      \tcp*{First action and next state}
      \If {$\ell_{t{-}1} - J(\widehat{\vs}) >\Delta$} {
        $\ell_t\leftarrow J(\widehat{\vs})$; $t\leftarrow t{+}1$; \tcp*{Proceed to the next level}
      }
      $k \leftarrow \neighsize(J(\widehat{\vs}), k)$; \tcp*{Adjust the neighborhood size}
	}
	\caption{DAMPC}
	\label{alg:distributed}
\end{algorithm}
\setlength{\floatsep}{1cm}

$\distributedampc$ (see Alg.~\ref{alg:distributed}) takes as input an MDP $\M$, a threshold $\varphi$ defining the goal states $G$, the maximum horizon length $h_{max}$, the maximum number of time steps $m$, the number of birds $B$, and a scaling factor $\beta$. It outputs a state $\vs_0$ in $I$, and a sequence of actions $\va^m$ taking $\M$ from $\vs_0$ to a state in $G$.

The initialization step (Line 1) picks an initial state $\vs_0$ from $I$, fixes the initial level $\ell_0$ as the cost of $\vs_0$, sets an arrays of costs to infinite values, sets the initial time, and sets the number of birds to process. 

The outer while loop (Lines 2-23) is active as long as $\M$ has not reached $G$ and time has not expired. In each time step, $\distributedampc$ first sets the sequences of accelerations $\overline{\va}(t)$ to ``not fixed yet'' (\texttt{nfy}), and then iterates (Lines 4-17) until all birds fix their accelerations through global consensus. This happens as follows. First, all birds determine their neighborhood (\textit{subflock}) $N_i$ and the cost decrement $\Delta_i$ that will bring them to the next level (Lines 8-9). Second, they call \localampc (see Section~\ref{sec:lampc}), which returns (Line 10): a sequence of actions $\va_{N_i}^{h_i}$ of length $h_i$ for the subflock, which should decrease the subflock cost by $\Delta_i$, the state $\widetilde{\vs}_{N_i}$ of the subflock after executing the first action in $\va_{N_i}^{h_i}$, the state $\widehat{\vs}_{N_i}$ after executing the last action, and the cost $\widehat{J}_i$ in the last state. Third, they determine the subflock $N_{i^{*}}$ with lowest cost as a winner (Line 12)\footnote{This step requires global consensus, but more generally, the loop on Lines 4-17 requires birds to have global information at multiple places. This can be quite inefficient in practice. A more practical approach, given in~\cite{saber2003consensus}, is based on a dynamic local (among neighbors) consensus, for fixed neighborhood graphs. Since we are adapting and changing neighborhood sizes, the results are not directly applicable. Nevertheless, we can still use a similar truly distributed approach (in place of global consensus on Lines 4-17), but preferred to experiment with the easier to implement global version, since adapting neighborhoods size is the main focus of our paper, and our goal was to evaluate if neighborhood sizes really shrink, and remain small as the flock converges to a state $\vs$ in $G$, with $J(\vs)\leqslant\varphi$.} and fix the acceleration sequences of all birds in this subflock (Lines 13-16).

After all accelerations sequences are fixed, that is $\texttt{Fixed}(\overline{\va}(t))$ is true, the first accelerations in this sequence are selected for the output (Line 18). The next state $\vs$ is set to $\tilde{\vs}$, the state of the flock after executing $\va(t)$, and $\hat{\vs}$ has the state after executing last action in $\overline{\va}(t)$. If we found a path that eventually decreases the cost by $\Delta$, we reached the next level, and advance time (Lines 19-21). In that case, we optionally decrease the neighborhood, and increase it otherwise (Line 22).

\begin{figure}[t]
\centering

\def\lav{cyan!50}
\def\oran{yellow!50}

\tikzstyle{birds}=[draw,circle,black,fill=\lav,
                  text=black,minimum width=10pt]
\tikzstyle{fixedbirds}=[draw,rectangle,black,fill=black,
                  text=white,minimum width=10pt]
\tikzstyle{chosenbirds}=[draw,circle, black, fill=\oran,
                       text=black,minimum width=25pt]
\tikzstyle{legendcb}=[rectangle, draw, black, rounded corners,
                     thin,fill=\oran,
                     text=black, minimum width=2.5cm]
\tikzstyle{legendfb}=[rectangle, draw, black, rounded corners,
                     thin,fill=black,
                     text=white, minimum width=2.5cm]
\tikzstyle{legendb}=[rectangle, draw, black, rounded corners, thin,fill=\lav,
                     text= black, minimum width= 2.5cm]
\tikzstyle{legend_general}=[rectangle, rounded corners, thin,
                           black, fill= white, draw, text=black,
                           minimum width=2.5cm, minimum height=0.8cm]

  
  
\begin{tikzpicture}[auto, thick]
  \node[birds] (a1) at (0,0) {$1$};
  \node[chosenbirds] (a2) at (0.1,1) {$2$};
  \node[fixedbirds] (a3) at (0.5,2) {$3$};
  \node[fixedbirds] (a4) at (1.3,1.3) {$4$};
  \node[fixedbirds] (a6) at (1.2,0.1) {$6$};
  \node[fixedbirds] (a7) at (2.6,0) {$7$};
  \node[fixedbirds] (a5) at (1.8,1.4) {$5$};
  
  \path (a2) edge (a1);
  \path (a2) edge (a3);
  \path (a2) edge (a4);
  \path (a2) edge (a6);
  
  \draw (-0.8,-0.6) rectangle (3,2.5) {};
\end{tikzpicture}
%
%
%
\begin{tikzpicture}[auto, thick]
  \node[fixedbirds] (a1) at (-0.5,0.7) {$1$};
  \node[fixedbirds] (a2) at (-0.2,1.5) {$2$};
  \node[fixedbirds] (a3) at (0.4,2.1) {$3$};
  \node[fixedbirds] (a4) at (1,1.5) {$4$};
  \node[birds] (a6) at (1.9,0.5) {$6$};
  \node[birds] (a7) at (2.6,0) {$7$};
  \node[chosenbirds] (a5) at (1.8,1.4) {$5$};
  
  \path (a5) edge (a6);
  \path (a5.north) edge (a3);
  \path (a5) edge (a4);
  
  \draw (-0.8,-0.6) rectangle (3,2.5) {};
\end{tikzpicture}
\begin{tikzpicture}[auto, thick]
  \node[fixedbirds] (a1) at (-0.5,0.7) {$1$};
  \node[fixedbirds] (a2) at (-0.2,1.5) {$2$};
  \node[fixedbirds] (a3) at (0.4,2.1) {$3$};
  \node[fixedbirds] (a4) at (1,1.5) {$4$};
  \node[fixedbirds] (a6) at (1.9,0.5) {$6$};
  \node[chosenbirds] (a7) at (2.6,0) {$7$};
  \node[fixedbirds] (a5) at (1.8,1.4) {$5$};
  
  \path (a7.north) edge (a5);
  \path (a7.north) edge (a6);
  \path (a7.north) edge (a4);
  
  \draw (-0.8,-0.6) rectangle (3.1,2.5) {};
\end{tikzpicture}
\caption{Left: Last round of consensus for neighborhood size four where Bird~2 runs Local AMPC taking as an input for \texttt{PSO} fixed accelerations of Birds~3,~4, and~6 together with \texttt{nfy} value for Bird~1. Middle: Second consensus round during the next time step where the neighborhood size was reduced to three as a result of the decreasing cost at the previous time step. Right: Third consensus round during the same time step where Bird~7 is the only one whose acceleration has not been fixed yet and it simply has to compute the solution for its neighborhood given fixed accelerations of Birds~4,~5, and~6.}
\label{fig:compare1}
\end{figure}
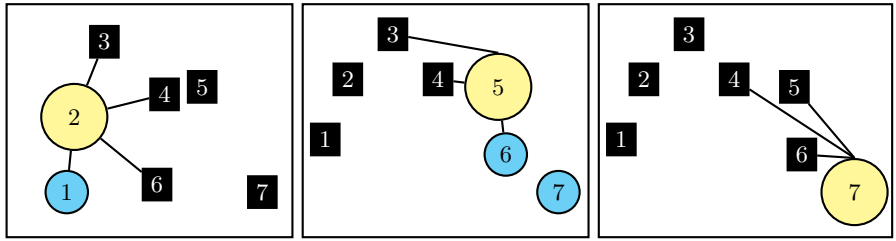
Fig.~\ref{fig:compare1} illustrates $\distributedampc$ for two consecutive consensus rounds after neighborhood resizing. Bigger yellow circles represent birds that are running $\localampc$. Smaller blue circles represent birds whose acceleration sequences are not completely fixed yet. Black squares mark birds with already fixed accelerations. Connecting lines are neighborhood relationship. 

Working with a real CPS flock requires careful consideration of energy consumption. Our algorithm accounts for this by using the smallest neighborhood necessary during next control input computations. Regarding deployment, we see the following approach. Alg.~\ref{alg:local_ampc} can be implemented as a local controller on each drone and communication will require broadcasting positions and output of the algorithm to other drones in the neighborhood through a shared memory. In this case, according to Alg.~\ref{alg:distributed}, a central agent will be needed to periodically compute the global cost and resize the neighborhood. Before deployment, we plan to use OpenUAV simulator~\cite{DBLP:conf/iccps/SchmittleLVDBRS18} to test \dampc on drone formation control scenarios described in~\cite{DBLP:conf/iccps/LukinaKSSDRBSG018}.

\subsection{The Local AMPC Algorithm}
\label{sec:lampc}
$\localampc$ is a modified version of the \texttt{AMPC} algorithm\cite{attackingV}, as shown in Alg.~\ref{alg:local_ampc}. 
Its input is an MDP $\M$, the current state $\vs_{N_i}$ of a subflock $N_i$, a vector of acceleration sequences $\va_{N_i}^{h_i}$, one sequence for each bird in the subflock, a cost decrement $\Delta$ to be achieved, a maximum horizon $h_{max}$ and a scaling factor $\beta$. In $\va_{N_i}^{h_i}$ some accelerations may not be fixed yet, that is, they have value \texttt{nfy}.

Its output is a vector of acceleration sequences $\overline{\va}^{*}$, one for each bird, that decreased the cost of the flock at most, the state $\vs_{N_i}$ of the subflock after executing the first action in the sequence, the state $\widehat{\vs}_{N_i}$ after executing all actions, and the cost $J_i$ actually achieved by the subflock in state $\widehat{\vs}_{N_i}$.
%
\begin{algorithm}[ht!]
\small
	\SetKwFunction{PSO}{PSO}
    \SetKwFunction{cost}{cost}
	\SetKwInOut{Input}{Input}
	\SetKwInOut{Output}{Output}
    \DontPrintSemicolon
	\Input{$\M\,{=}\,(S,A,T,J,I)$, $\vs_{N_i}$, $\va_{N_i}^{h_i}$, $\Delta$, $h_{max}$, $\beta$}
	\Output{$\vs_{N_i}$, $\widehat{\vs}_{N_i}$, $\overline{\va}^*$, $\widehat{J}_i$}
	\BlankLine
    ${p}\leftarrow{2}\cdot\beta\cdot{h_i}\cdot{B}$; $\widehat{J}_i\leftarrow$ Inf; \tcp{Initialization}
	\BlankLine
	\While{$(J(\vs_{N_i})-\widehat{J}_{i} < \Delta) \land (h_i \leqslant h_{max})$}
	{
       \tcp{Run PSO with local information $\vs_{N_i}$ and partial solution $\va_{N_i}^h$}  
       $(\vs_{N_i},\widehat{\vs}_{N_i},\overline{\va}^*)\leftarrow$\PSO{$\M,\vs_{N_i},\va_{N_i}^{h_i},p,h_i$};\\
	   $\widehat{J}_i\leftarrow{J}(\widehat{\vs}_{N_i})$;
       ${h_i}\leftarrow{h_i}+1$; 
       $p\leftarrow{2}\cdot\beta\cdot{h_i}\cdot{B}$; \tcp{increase horizon}
    }
	\caption{LocalAMPC}
	\label{alg:local_ampc}
\end{algorithm}
$\localampc$ first initializes (Line 1) the number of particles $p$ to be used by the particle swarm optimization algorithm (\texttt{PSO}), proportionally to the input horizon $h_i$ of the input accelerations $\va_{N_i}^{h_i}$, to the number of birds $B$, and the scaling factor $\beta$. It then tries to decrement the cost of the subflock by at least $\Delta$, as long as the maximum horizon $h_{max}$ is not reached (Lines 2-6). 

For this purpose it calls \texttt{PSO} (Line 4) with an increasingly longer horizon, and an increasingly larger number of particles. The idea is that the flock might have to first overcome a cost bump, before it gets to a state where the cost decreases by at least $\Delta$. \texttt{PSO} extends the input sequences of fixed actions to the desired horizon with new actions that are most successful in decreasing the cost of the flock, and it  computes from scratch the sequence of actions, for the \texttt{nfy} entries. The result is returned in $\overline{\va}^{*}$. \texttt{PSO} also returns the states $\vs_{N_i}$ and $\widehat{\vs}_{N_i}$ of the flock after applying the first and the last actions, respectively. Using this information, it computes the actual cost achieved by the flock.

\begin{lemma}[Local Convergence]
\label{lem:local_ampc}
Given $\M=(S,A,T,J,I)$, an MDP with cost function $\texttt{cost}$, and a nonempty set of target states $G\subset S$ with $G=\{\vs\,|\,J(\vs)\leqslant\varphi\}$. If the transition relation $T$ is controllable with actions in $A$ for every (local) subset of agents, then there exists a finite (maximum) horizon $h_{max}$ such that \localampc is able to find the best actions $\va_{N_i}^*$ that decreases the cost of a neighborhood of agents in the states $\vs_{N_i}$ by at least a given $\Delta$.
\end{lemma}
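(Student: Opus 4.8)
The plan is to localize the per-step convergence argument given for \ampc\ in~\cite{attackingV} to the subflock $N_i$ rather than the whole flock. Concretely, I would establish two facts. First (existence and horizon bound): the controllability hypothesis guarantees that there is a finite-length sequence of accelerations for the agents in $N_i$ that decreases the local cost $J(\vs_{N_i})$ by at least $\Delta$, which bounds the horizon that ever needs to be explored and hence fixes a finite $h_{max}$. Second (optimizer): given a horizon at least this large, the \texttt{PSO} call inside \localampc\ --- whose particle count $p = 2\beta h_i B$ grows with the horizon --- finds such a sequence with high probability, exactly as assumed for \texttt{PSO} in the \ampc\ framework, so the \texttt{while} loop of Alg.~\ref{alg:local_ampc} terminates with $J(\vs_{N_i}) - \widehat{J}_i \geqslant \Delta$ and returns the least-cost feasible sequence $\va_{N_i}^{*}$.

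For the first fact, if $J(\vs_{N_i}) \leqslant \varphi$ there is nothing to prove, so assume $J(\vs_{N_i}) > \varphi$. Since $G$ is nonempty, fix a target state $\vs^{*}$ with $J(\vs^{*}) \leqslant \varphi$ (or, more strongly, a minimizer of $J$ over the states reachable from $\vs_{N_i}$). Apply controllability to the relevant local subset --- the agents of $N_i$ whose accelerations are still \texttt{nfy}, with the already-fixed neighbors held to their prescribed actions --- to obtain a finite number $n$ of steps and a choice of the free accelerations driving the subflock from $\vs_{N_i}$ to $\vs^{*}$. Along this trajectory the cost drops from $J(\vs_{N_i})$ to $J(\vs^{*}) \leqslant \varphi < J(\vs_{N_i})$; since \dampc\ uses $\Delta = J(\vs_{N_i})/(m{-}t) \leqslant J(\vs_{N_i}) - \varphi$ whenever $m-t \geqslant 2$, this trajectory achieves the required decrement. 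Taking $h_{max} \geqslant n$ then guarantees the horizon loop is not truncated before a sufficient horizon is reached. (The boundary case $m-t=1$ forces $\Delta = J(\vs_{N_i})$, hence requires $\vs^{*}$ itself to be a global minimizer; this is why $\Delta$ is taken proportional to $J(\vs_{N_i})$ rather than fixed, and the argument should make the needed reachability of that minimizer explicit.)

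For the second fact, at the horizon at which a sufficient sequence first exists --- and at every larger horizon, since a shorter sequence is subsumed by padding with, e.g., zero accelerations --- the optimal sequence decreases the cost by \emph{strictly} more than $\Delta$, so (by continuity of $J$ along the affine dynamics, or just by openness of the set of decrements exceeding $\Delta$) there is an open set of acceleration sequences achieving at least $\Delta$. Since \texttt{PSO} samples the bounded action space and the number of particles increases with $h_i$, the probability of hitting this set tends to one; \localampc\ then exits its loop with $J(\vs_{N_i}) - \widehat{J}_i \geqslant \Delta$ and reports the best sequence found. This is precisely the statistical per-step guarantee of \ampc~\cite{attackingV}, transplanted to the subflock $N_i$.

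I expect the main obstacle to be pinning down exactly what ``controllable \ldots\ for every local subset of agents'' must mean in the first step: one needs it to cover the \emph{partially fixed} situation, where some neighbors' accelerations are supplied externally and only the remaining agents are free, and to still imply that the subflock can be steered below the cost threshold. A lighter, but worth-stating, point is feasibility of $\Delta$ itself, i.e. that $\Delta \leqslant J(\vs_{N_i}) - \min J$ over the reachable set, which has to be reconciled with the definition $\Delta = J(\vs_{N_i})/(m{-}t)$ used in Alg.~\ref{alg:distributed}.
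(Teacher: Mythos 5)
Your proof takes essentially the same route as the paper's: the paper's (much terser) argument likewise singles out the partially-fixed accelerations as the one subtlety --- full controllability only resumes beyond the horizon on which neighbors' actions are prescribed --- and then delegates both the existence of a sufficient finite horizon and \texttt{PSO}'s ability to find the witness sequence to Theorem~1 of~\cite{attackingV}, which is exactly what your two facts unpack in more detail. One small nit: your claim that $\Delta = J(\vs_{N_i})/(m-t) \leqslant J(\vs_{N_i}) - \varphi$ whenever $m-t \geqslant 2$ actually requires $J(\vs_{N_i}) \geqslant \frac{m-t}{m-t-1}\varphi$ rather than merely $J(\vs_{N_i}) > \varphi$, though this does not affect your overall argument (the paper avoids the issue by not discussing $\Delta$-feasibility at all).
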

\begin{proof}
In the input to \localampc,
the accelerations of some birds in $N_i$ may be fixed (for some horizon). As a consequence, the MDP $\M$ may not be fully controllable within this horizon. Beyond this horizon, however, \texttt{PSO} is allowed to freely choose the accelerations, that is, the MDP $\M$ is fully controllable again. The result now follows from convergence of \texttt{AMPC} (Theorem~1 from~\cite{attackingV}).\qed
\end{proof}

\subsection{Dynamic Neighborhood Resizing}
\label{sec:resizing}

The key feature of $\distributedampc$ is that it \emph{adaptively resizes neighborhoods}. This is based on the following observation: \emph{as the agents are gradually converging towards a global optimal state, they can explore smaller neighborhoods} when computing actions that will improve upon the current configuration. 


Adaptation works on lookahead cost, which is the cost that is reachable in some future time. Line~20 of $\distributedampc$ is reached (and the level $t$ is incremented) whenever we are able to decrease this look-ahead cost. If level $t$ is incremented, neighborhood size $k\in [k_{min},k_{max}]$ is decremented, and incremented otherwise, as follows: $\neighsize(J, k) =$ 
\begin{align}
\begin{split}
\begin{cases}
   \min\left(\max\left(k-\left\lceil(1-\frac{J(s(t))}{k}) \right\rceil,k_{min}\right), k_{max}\right) & \text{if level $t$ was incremented} \\  
   \min\left(k+1, k_{max}\right) & \text{otherwise}.
  \end{cases}
\end{split}\label{eq:size}
\end{align}
In Fig.~\ref{fig:example} we depict a simulation-trace example, demonstrating how levels and neighborhood size are adapting to the current value of the cost function.

\begin{figure}[t]
\centering	
\includegraphics[width=.5\textwidth]{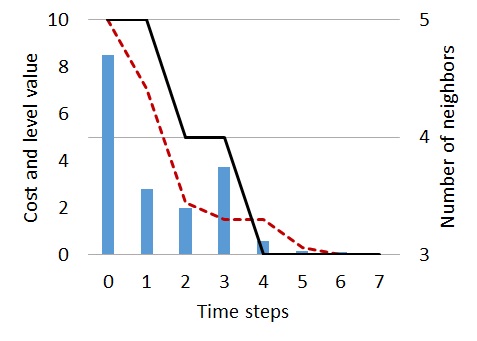}\includegraphics[width=.5\textwidth]{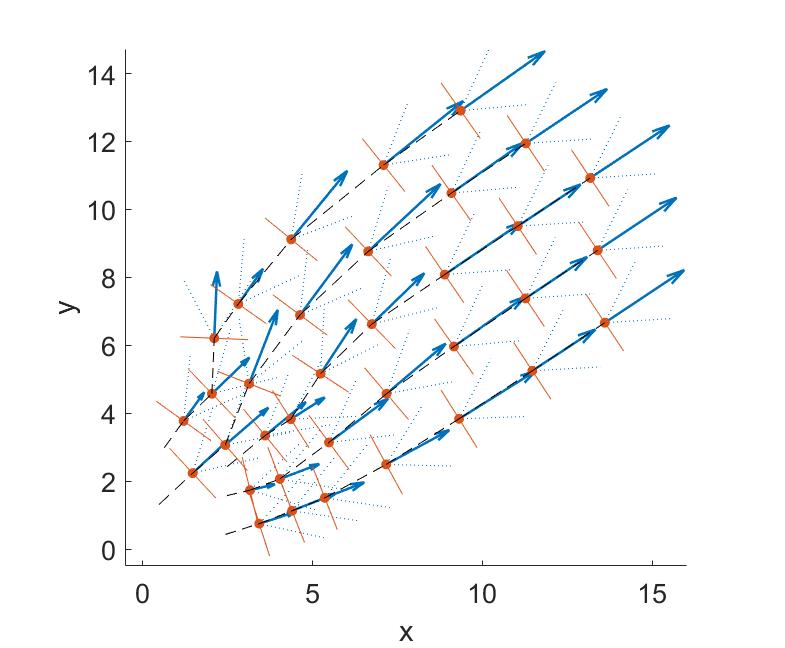}
\caption{Left: Blue bars are the values of the cost function in every time step. Red dashed line in the value of the Lyapunov function serving as a threshold for the algorithm. Black solid line is resizing of the neighborhood for the next step given the current cost. Right: Step-by-step evolution of the flock from an arbitrary initial configuration in the left lower corner towards a V-formation in the right upper corner of the plot.}
\label{fig:example}
\end{figure}

\section{Convergence and Stability}
\label{sec:convergence}

Since we are solving a nonlinear nonconvex optimization problem, the cost $J$ itself may not decrease monotonically. However, the look-ahead cost -- the cost of some future reachable state -- monotonically decreases. These costs are stored in level variables $\ell_t$ in Algorithm~\dampc and they define a Lyapunov function $V$.
\begin{eqnarray}
V(t) = \ell_t \quad \mbox{ for levels $t = 0,1,2,\ldots$} \label{func:lyap}
\end{eqnarray}
where the levels decrease by at least a minimum $\Delta$, that is, $V(t)\,{-}\,V(t\,{+}\,1)\,{>}\,\Delta$.

\begin{lemma}
$V(t): \mathbb{Z}\rightarrow\mathbb{R}$ defined by~(\ref{func:lyap}) is a valid Lyapunov function, i.e., it is positive-definite and monotonically decreases until the system reaches its goal state. 
\end{lemma}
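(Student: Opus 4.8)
The plan is to check the two defining properties of a discrete-time Lyapunov function for $V(t)=\ell_t$ separately: nonnegativity / positive-definiteness, and strict monotone decrease along the run of Algorithm~\dampc until a goal state in $G=\{\vs\mid J(\vs)\leqslant\varphi\}$ is reached.

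For positive-definiteness I would argue as follows. By Lines~1 and~20 of \dampc every level value is $J$ evaluated at a reachable flock state ($\ell_0=J(\vs_0)$, and $\ell_t=J(\widehat{\vs})$ thereafter). Since $J$ is a sum of squares by Eq.~\eqref{eq:cost}, it is nonnegative, hence $V(t)=\ell_t\geqslant 0$ for every $t$. Moreover the outer while loop reaches Line~20 only while $\ell_{t-1}>\varphi$, so $V(t)>\varphi$ at every level produced before termination, whereas $V\leqslant\varphi$ exactly on $G$; thus $V$ is positive-definite relative to the goal set $G$ (equivalently, replacing $V$ by $\max(V-\varphi,0)$ yields a function vanishing precisely on $G$).

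For the monotone-decrease property the core observation is almost definitional: $\ell_t$ is assigned only on Line~20, and only under the Line~19 guard $\ell_{t-1}-J(\widehat{\vs})>\Delta$, after which $\ell_t\leftarrow J(\widehat{\vs})$. Hence whenever the level index advances we get $V(t-1)-V(t)=\ell_{t-1}-\ell_t>\Delta>0$, the claimed strict decrease by at least the minimum decrement. What remains, to justify the ``until the system reaches its goal state'' clause, is that \dampc cannot stall at a fixed level with $\ell_{t-1}>\varphi$ forever. Here I would invoke Lemma~\ref{lem:local_ampc}: with $h_{max}$ chosen large enough and under the controllability assumption, the winning subflock $N_{i^*}$ selected on Line~12 decreases its local cost by at least $\Delta_{i^*}$; and whenever the current neighborhood size $k$ is too small for this to translate into a decrease of the global look-ahead cost, the NeighSize rule of Eq.~\eqref{eq:size} on Line~22 increments $k$, so after at most $k_{max}-k_{min}\leqslant B$ consecutive stalls $k$ reaches $k_{max}=B$, at which point \dampc coincides with centralized \ampc and Theorem~1 of~\cite{attackingV} guarantees a cost-decreasing adaptive horizon exists. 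Thus the Line~19 guard is eventually satisfied, the level advances, and since $V(0)=J(\vs_0)$ is finite while each advance costs more than a uniformly positive amount, only finitely many levels occur before $V(t)\leqslant\varphi$.

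The main obstacle I anticipate is the coupling between the local and global costs: Lemma~\ref{lem:local_ampc} certifies only a decrease of the subflock cost $J(\vs_{N_{i^*}})$, whereas the Line~20 update uses the global quantity $J(\widehat{\vs})$, and the per-step target $\Delta_i=J(\vs_{N_i})/(m-t)$ on Line~9 is not literally the $\Delta$ appearing in the Line~19 guard. I would address this by (i) exhibiting a uniform positive lower bound $\Delta_{\min}>0$ on the decrement while the flock is away from $G$, using $m-t\geqslant 1$ and $J(\vs_{N_i})>0$ off the goal, so that ``decrease by at least $\Delta$'' is meaningful across all levels and gives the telescoping bound $V(0)-V(t)>t\,\Delta_{\min}$; and (ii) falling back on the neighborhood-resizing argument to $k=B$, where local and global costs coincide and convergence reduces cleanly to the already-established behavior of \ampc. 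The remaining checks (nonnegativity, the finite number of levels) are then routine.
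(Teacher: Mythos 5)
Your proof is correct and its core coincides with the paper's (much terser) argument: nonnegativity follows because each level $\ell_t$ equals $J$ evaluated at some state and $J$ is a sum of squares, and the strict decrease by more than $\Delta$ is enforced directly by the guard on Line~19 of Algorithm~\ref{alg:distributed}. The additional liveness argument you supply --- that the algorithm cannot stall forever at a level above $\varphi$, via neighborhood growth to $B$ and Theorem~1 of~\cite{attackingV} --- is not part of the paper's proof of this lemma but is deferred to Theorem~\ref{th:global_convergence}, so your observations about the mismatch between the local decrements $\Delta_i$ and the global $\Delta$ amount to a fair critique of the paper's presentation rather than a gap in your own argument.
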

\begin{proof}
Note that the cost function $J(\vs)$ is positive by definition, and since $l_t$ equals $J(\vs)$ for some state $\vs$, $V$ is nonnegative. Line~19 of Algorithm~\dampc\ guarantees that $V$ is monotonically decreasing by at least $\Delta$.\qed
\end{proof}


\begin{lemma}[Global Consensus]
Given Assumptions 1-7 in Section~\ref{sec:dist_flock}, all agents in the system will fix their actions in a finite number of consensus rounds.
\end{lemma}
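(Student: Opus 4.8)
The plan is to show that the inner consensus loop of $\distributedampc$ (Lines~4--17 of Alg.~\ref{alg:distributed}) strictly decreases, in every round, the number of birds whose acceleration sequence is still marked \texttt{nfy} (``not fixed yet''). Since this count starts at $B$ and is a nonnegative integer, the loop then necessarily terminates after at most $B$ rounds, which is the claim (and in fact establishes the linear, rather than quadratic, bound on the number of consensus rounds advertised in Section~\ref{sec:related}).

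First I would check that a single round is well defined and always completes. By the loop guard $\lnot\texttt{Fixed}(\overline{\va}(t))$, the set $R$ of not-yet-fixed birds is nonempty. Every $i\in R$ calls \localampc, whose internal \textbf{while} loop is bounded by $h_{max}$ (Lines~2--6 of Alg.~\ref{alg:local_ampc}), so it returns a value $\widehat{J}_i$ in finitely many iterations --- and, by Lemma~\ref{lem:local_ampc}, under the stated controllability hypothesis this value realizes a cost decrement of at least the requested $\Delta_i$. Hence $i^{*}=\argmin_{i\in R}\widehat{J}_i$ is well defined, and Lines~13--16 assign a fixed acceleration to every bird in $\texttt{Neighbors}(i^{*},k)$.

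The crux is a monotonicity argument. By Assumption~1 the neighborhood $\texttt{Neighbors}(i,k)$ of any bird always contains that bird itself; hence $i^{*}\in\texttt{Neighbors}(i^{*},k)$, so after Lines~13--16 bird $i^{*}$ --- which was in $R$, hence unfixed at the start of the round --- is now fixed. Fixing is moreover irreversible inside the inner loop: the neighborhood size $k$ changes only on Line~22, \emph{outside} the consensus loop, and when a bird runs \localampc\ it leaves the already-fixed entries of $\overline{\va}_{N_i}$ untouched, recomputing only the \texttt{nfy} entries. Therefore $|\{\,j : \overline{\va}_j(t)=\texttt{nfy}\,\}|$ decreases by at least one per round and reaches $0$ after at most $B$ rounds, at which point $\texttt{Fixed}(\overline{\va}(t))$ holds and the loop exits.

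The step I expect to be the main obstacle is not the counting but checking consistency of the distributed commitment under Assumption~4: the birds' proposals overlap (each $i$ proposes accelerations for its whole subflock $N_i$), so one must argue that committing only $i^{*}$'s proposal, restricted to $\texttt{Neighbors}(i^{*},k)$, never contradicts accelerations committed in earlier rounds. This holds precisely because \localampc\ treats previously fixed accelerations as immutable inputs; every committed value is thus produced by a run of \localampc\ that respected all earlier commitments, and I would state this invariant explicitly as the one non-routine point of the argument.
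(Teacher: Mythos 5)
Your proposal is correct and follows essentially the same route as the paper's proof, which likewise argues that the set $R$ of agents with \texttt{nfy} values shrinks monotonically each round because all agents in $N_{i^*}$ (in particular $i^*$ itself) become fixed. Your version is simply more explicit about the points the paper leaves implicit: that the decrease is \emph{strict} because $i^*\in\texttt{Neighbors}(i^*,k)$ by Assumption~1, that \localampc\ terminates since its loop is bounded by $h_{max}$, and that earlier commitments are never overwritten.
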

\begin{proof}
During the first consensus round, each agent $i$ in the system runs \localampc for its own neighborhood $N_i$ of the current size $k$. Due to Lemma~\ref{lem:local_ampc}, $\exists \widehat{h}$ such that a solution, i.e. a set of action (acceleration) sequences of length $\widehat{h}$, will be found for all agents in the considered neighborhood $N_i$. Consequently, at the end of the round the solutions for at least all the agents in $N_{i^*}$, where $i^*$ is the agent which proposed the globally best solution, will be fixed. During the next rounds the procedure recurses. Hence, the set $R$ of all agents with \texttt{nfy} values is monotonically decreasing with every consensus round.\qed
\end{proof}

Global consensus is reached by the system during communication rounds. However, to achieve the global optimization goal we prove that the consensus value converges to the desired property.
\begin{definition}
Let $\{\vs(t)\,{:}\,t=1,2,\ldots\}$ be a sequence of random vector-variables and $s^*$ be a random or non-random. Then $\vs(t)$ \textbf{converges with probability one} to $s^*$ if $\mathbb{P}\left[\bigcup\limits_{\varepsilon>0}\bigcap\limits_{N<\infty}\bigcup\limits_{n\geqslant N}|\vs(t)-\vs^*|\geqslant\varepsilon\right]=0$.
\end{definition}

\begin{lemma}[Max-neighborhood convergence] If $\dampc$ is run with constant neighborhood size $B$, then it behaves identically to centralized $\ampc$.
\end{lemma}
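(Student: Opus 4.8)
The plan is to trace Algorithm~\ref{alg:distributed} under the standing assumption $k\,{=}\,B$ (which holds initially by Line~1, and is preserved throughout since $B\,{=}\,k_{\mathit{max}}$ and the size is held constant) and to show that each of its constructs collapses onto the corresponding construct of centralized $\ampc$. First I would observe that with $k\,{=}\,B$ the call $\texttt{Neighbors}(i,k)$ returns the entire flock $\{1,\dots,B\}$ for every bird $i$, so $\vs_{N_i}\,{=}\,\vs$ for all $i$ and the target decrement $\Delta_i\,{=}\,J(\vs_{N_i})/(m{-}t)$ equals a single flock-wide value $\Delta\,{=}\,J(\vs)/(m{-}t)$ --- exactly the sufficient-decrease quantity that $\ampc$ uses to drive its horizon adaptation and its level bookkeeping.

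Next I would show that the inner consensus loop (Lines~4--17) executes exactly once per time step. At the top of a time step $\overline{\va}(t)\,{=}\,\texttt{nfy}$ for every bird, hence $R\,{=}\,\{1,\dots,B\}$; every bird then calls $\localampc(\M,\vs,\overline{\va},\Delta,h_{\mathit{max}},\beta)$ with \emph{identical} arguments and a fully-\texttt{nfy} acceleration vector. After $i^*$ is selected, Lines~13--16 fix $\overline{\va}_b(t)$ for all $b\,{\in}\,\texttt{Neighbors}(i^*,B)\,{=}\,\{1,\dots,B\}$, so $\texttt{Fixed}(\overline{\va}(t))$ holds and the loop terminates; no recursion on strict sub-flocks ever occurs.

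Then I would identify the remaining per-step work with $\ampc$. Because the input acceleration vector is entirely \texttt{nfy}, $\localampc$ has $\texttt{PSO}$ synthesize the whole length-$h_i$ action sequence from scratch; its loop is then precisely: start from a small horizon, run $\texttt{PSO}$ with $p\,{=}\,2\beta h_i B$ particles at horizon $h_i$, and increment $h_i$ until the flock cost drops by at least $\Delta$ or $h_i\,{>}\,h_{\mathit{max}}$ --- i.e., the adaptive-horizon rule of $\ampc$ (the algorithm of~\cite{attackingV}, whose termination is the content of Lemma~\ref{lem:local_ampc}) with the same $\Delta$. Consequently the level update on Lines~19--21, the Lyapunov function $V(t)\,{=}\,\ell_t$, the outer termination test $\ell_{t-1}\,{>}\,\varphi \land t\,{<}\,m$, and the emitted action $\mathrm{first}(\overline{\va}(t))$ all coincide with those of $\ampc$, while Line~22's $\texttt{NeighSize}$ keeps $k\,{=}\,B\,{=}\,k_{\mathit{max}}$. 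Hence $\dampc$ with $k\,{\equiv}\,B$ induces the same Markov chain $\M_\sigma$ and the same output $\va^m$ as $\ampc$.

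The one point that needs care --- and the main obstacle --- is that $\dampc$ runs $B$ copies of $\localampc$ in parallel and then takes $\argmin_{i\in R}\widehat{J}_i$, whereas $\ampc$ invokes its optimizer once; since $\texttt{PSO}$ is randomized these are not literally the same random experiment. I would resolve this the way the rest of the paper treats $\texttt{PSO}$: either by reading ``behaves identically'' at the level of the reachable-level sequence and the Lyapunov certificate (the extra copies can only lower the realized cost, never change which values $V(t)$ are attainable nor the convergence guarantee), or by assuming a common random seed so that the $i\,{=}\,1$ copy reproduces the $\ampc$ run verbatim and the $\argmin$ is a no-op with respect to correctness. Either reading yields the claim, and I would state the chosen convention explicitly at the outset of the proof.
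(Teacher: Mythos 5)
Your proposal is correct and takes essentially the same route as the paper, whose entire proof is the one-line observation that with neighborhood size $B$ all accelerations are fixed in the first consensus round, so the inner loop collapses and \dampc{} reduces to centralized \ampc. Your additional care about the $B$ parallel \texttt{PSO} calls followed by $\argmin$ not being literally the same random experiment as a single \texttt{PSO} call is a genuine subtlety the paper's proof silently glosses over, and your proposed resolution (reading ``identically'' at the level of the induced level sequence and convergence guarantee, or fixing a common seed) is a reasonable way to close that gap.
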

\begin{proof}
If $\distributedampc$ uses neighborhood $B$, then it behaves like the centralized $\ampc$, because the accelerations of all birds are fixed in the first consensus round. 
\end{proof}

\begin{theorem}[Global Convergence]
Let $\M$ be an MDP $(S,A,T,J,I)$ with a positive and continuous cost function $J$ and a nonempty set of target states $G\,{\subset}\,S$, with $G\,{=}\,\{\vs\,|\,J(\vs)\,{\leqslant}\,\varphi\}$. If there exists a finite horizon $h_{max}$ and a finite number of execution steps $m$, such that centralized $\ampc$ is able to find a sequence of actions $\{\va(t):\:t=1,\ldots,m\}$ that brings $\M$ from a state in $I$ to a state in $G$, then  $\distributedampc$ is also able to do so, with probability one. 
\label{th:global_convergence}
\end{theorem}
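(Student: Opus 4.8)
The plan is to reduce $\dampc$ to centralized $\ampc$ by exploiting the worst-case behaviour of the resizing rule, and to bound the total number of time steps using the Lyapunov function $V$. Concretely, I would establish three bounds and then multiply them. First, by the Lyapunov lemma, $V(t)=\ell_t$ is nonnegative and satisfies $V(t)-V(t{+}1)>\Delta$ at every level increment, where $\Delta=\Delta_i=J(\cdot)/(m-t)$ stays bounded below by a positive constant as long as a goal state has not been reached; hence the level counter $t$ can be incremented only finitely often, at most $N^{*}=\lceil(J(\vs_0)-\varphi)/\Delta_{\mathit{min}}\rceil$ times, before $\ell_t\leqslant\varphi$, i.e. before $\M$ enters $G$.

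Second, I would show the level cannot stall for more than a bounded number of consecutive time steps. Inspecting Line~22 and Eq.~(\ref{eq:size}): whenever Line~20 is \emph{not} reached (the look-ahead cost failed to drop by $\Delta$), the neighborhood size $k$ is increased by one and saturates at $k_{\mathit{max}}=B$. So after at most $B-k_{\mathit{min}}$ consecutive failures every bird's neighborhood is the whole flock. At that point the Max-neighborhood convergence lemma applies: $\dampc$ performs a single consensus round that fixes all accelerations at once and behaves exactly like one step of centralized $\ampc$. By the hypothesis of the theorem together with the convergence guarantee of $\ampc$ (Theorem~1 of~\cite{attackingV}), and by Lemma~\ref{lem:local_ampc} applied to the single subflock, there is a horizon $h\leqslant h_{\mathit{max}}$ for which the returned action sequence yields a last state whose cost is at least $\Delta$ below $\ell_{t-1}$ with probability one; thus Line~20 is reached and the level is incremented. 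Consequently the number of consecutive non-increments is bounded by a constant $c\leqslant B-k_{\mathit{min}}+1$. Combining with the first bound, and with the Global Consensus lemma (each time step terminates after finitely many consensus rounds), $\dampc$ drives $\M$ from $I$ into $G$ in at most $c\cdot N^{*}$ time steps, each of finitely many rounds.

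Third, the ``with probability one'' qualifier: the only source of failure is the randomness of \texttt{PSO} inside each $\localampc$/$\ampc$ call, in particular the decisive calls at $k=B$. Since the number of such calls along a run is finite (bounded by $c\cdot N^{*}$ times the finite number of consensus rounds per step), a union bound over these events, together with the definition of convergence with probability one, transfers the per-call statistical guarantee of $\ampc$ to a global guarantee for $\dampc$. I expect the main obstacle to be the key step in the second paragraph --- arguing that forcing $k$ up to $B$ necessarily produces a level increment. This is precisely where the neighborhood-resizing dynamics must be tied back to the decrease of $V$, and it also requires reconciling the per-subflock decrements $\Delta_i$ used by $\localampc$ with the single global look-ahead cost $J(\widehat{\vs})$ tested in Line~20; a clean treatment may require noting that at $k=B$ there is only one subflock, so $\Delta_i$ and the global decrement coincide, and that the oscillation of $k$ between $B$ and smaller values does not prevent $V$ from making progress each time $B$ is revisited.
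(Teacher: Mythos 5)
Your proposal follows essentially the same route as the paper's proof: on repeated failure to decrease the look-ahead cost, the resizing rule forces $k$ up to $B$, at which point the Max-neighborhood lemma reduces $\dampc$ to centralized $\ampc$ and Theorem~1 of~\cite{attackingV} guarantees a level increment, while the Lyapunov function $V$ bounds the number of levels and rules out indefinite switching between shrinking and growing neighborhoods. Your version is in fact more explicit than the paper's --- the bounds $N^{*}$ and $c$, and the observation that at $k=B$ the per-subflock decrement $\Delta_i$ coincides with the global test on Line~20, are precisely the quantitative content the paper leaves implicit.
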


\begin{proof}
We illustrate the proof by our example of flocking. Note that the theorem is valid in the general formulation above for the fact that as global Lyapunov function approaches zero, the local dynamical thresholds will not allow neighborhood solutions to significantly diverge from reaching the state obtained as a result of repeated consensus rounds.
Owing to Lemma~\ref{lem:local_ampc}, after the first consensus round, Alg.~\ref{alg:local_ampc} finds a sequence of best accelerations of length $h_{i^{*}}$, for birds in subflock $N_{i^{*}}$, decreasing their cost by $\Delta_{i^{*}}$. In the next consensus round, birds $j$ outside $N_{i^{*}}$ have to adjust the accelerations for their subflock $N_j$, while keeping the accelerations of the neighbors in $N_{i^{*}}\,{\cap}\,N_j$ to the already fixed solutions. If bird $j$ fails to decrease the cost of its subflock $N_j$ with at least $\Delta_j$  within prediction horizon $h_{i^{*}}$, then it can explore a longer horizon $h_j$ up to $h_{max}$. This allows \PSO to compute accelerations for the birds in $N_{i^{*}}\,{\cap}\,N_j$ in horizon interval $h_j\,{<}\,h\,{\leqslant}\,h_{i^{*}}$, decreasing the cost of $N_j$ by $\Delta_j$. Hence, the entire flock decreases its cost by $\Delta$ (this defines Lyapunov function $V$ in Eq.~\ref{func:lyap}) ensuring convergence to a global optimum. If $h_{max}$ is reached before the cost of the flock was decreased by $\Delta$, the size of the neighborhood will be increased by one, and eventually it would reach $B$. Consequently, using Theorem~1 in~\cite{attackingV}, there exists a horizon $h_{\max}$ that ensures global convergence. For this choice of $h_{max}$ and for maximum neighborhood size, the cost is guaranteed to decrease by $\Delta$, and we are bound to proceed to the next level in $\distributedampc$. The Lyapunov function on levels guarantees that we have no indefinite switching between ``decreasing neighborhood size'' and ``increasing neighborhood size'' phases, and we converge (see Fig.~\ref{fig:example_local}).\qed
\end{proof}

Fig.~\ref{fig:example_local} illustrates the proof of global convergence of our algorithm, where we overcome a local minimum by gradually adapting the neighborhood size to proceed to the next level defined by the Lyapunov function. In the plot on the right, we see $7$ birds starting from an arbitrary initial state near the origin $(x,y)=(0,0)$, and eventually reaching V-formation at position $(x,y)\approx (300,100)$. However, around $x\approx 50$, the flock starts to drift away from a V-formation, but our algorithm is able to bring it back to a V-formation. Let us see how this is reflected in terms of changing cost and neighborhood sizes. In the plot on the left, we see the cost starting very high (blue lines), but mostly decreasing with time steps initially. When we see an unexpected rise in cost value at time steps in the range $11{-}13$ (corresponding to the divergence at $x\approx 50$), our algorithm adaptively increases the horizon $h$ first, and eventually the neighborhood size, which eventually increases back to $7$, to overcome the divergence from V-formation, and maintain the Lyapunov property of the red function. Note that the neighborhood size eventually decreases to three, the minimum for maintaining a V-formation.


The result presented in~\cite{attackingV} applied to our distributed model, together with Theorem~\ref{th:global_convergence}, ensure the validity of the following corollary.
\begin{corollary}[Global Stability]
Assume the set of target states $G\in S$ has been reached and one of the following perturbations of the system dynamics has been applied: a) the best next action is chosen with probability zero (crash failure); b) an agent is displaced (sensor noise); c) an action of a player with opposing objective is performed. Then applying Algorithm~\ref{alg:distributed} the system converges with probability one from a disturbed state to a state in $G$.
\end{corollary}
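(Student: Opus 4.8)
The plan is to reduce the corollary to a composition of two facts already in hand: the adversary-resilience of centralized $\ampc$ established in \cite{attackingV}, and the convergence-transfer principle of Theorem~\ref{th:global_convergence}. The guiding observation is that a perturbation, whatever its nature, merely replaces the current flock state $\vs \in G$ by some disturbed state $\vs' \in S$; global stability is then nothing but convergence from this new ``initial'' state back into $G$, which is the situation Theorem~\ref{th:global_convergence} is built to handle.

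First I would formalize each of the three perturbation types as producing a well-defined post-disturbance configuration: for (a) the transition is taken under an action other than the optimal one; for (b) the position and velocity components of one agent are shifted by a bounded amount; for (c) one block of the action vector is instead supplied by an opponent whose objective is to maximize $J$. In every case the resulting state still lies in $S=\mathbb{R}^{4B}$, and $J$ stays positive and continuous there, so the structural hypotheses of Theorem~\ref{th:global_convergence} remain intact. Next I would invoke Theorem~1 of \cite{attackingV}, which states precisely that, under the controllability assumption on $\M$, centralized $\ampc$ with an adaptive horizon bounded by a finite $h_{max}$ drives the system back into $G$ within a finite number of steps $m$ even when the state is perturbed at every step — in particular under any of (a)--(c). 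This supplies the premise of Theorem~\ref{th:global_convergence}: centralized $\ampc$ is able to find a sequence of actions bringing $\M$ from the disturbed state to a state in $G$.

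Applying Theorem~\ref{th:global_convergence} with the initial distribution concentrated at $\vs'$ then yields that $\distributedampc$, run from $\vs'$, reaches $G$ with probability one. Since the level-based Lyapunov function $V$ of Eq.~\ref{func:lyap} is re-established on this post-disturbance run and decreases by at least $\Delta$ per level, there is no indefinite oscillation between the neighborhood-shrinking and neighborhood-growing phases, and convergence with probability one follows. The only remaining bookkeeping is to note that, for a one-shot disturbance (cases (a) and (b)), after a single corrective step the dynamics are again fully controllable, so the hypotheses of Theorem~\ref{th:global_convergence} are met verbatim.

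I expect case (c) to be the delicate step: unlike (a) and (b), it introduces a \emph{persistent} opponent, so the reduction is legitimate only if the premise ``$\ampc$ finds a recovering action sequence'' is read in the adversarial, game-theoretic sense used in \cite{attackingV}. Justifying that reading — i.e.\ arguing that our controllability hypothesis on $T$ for every local subset of agents is strong enough that the controller dominates the opponent both globally and at the level of individual subflocks $N_i$ during the consensus rounds — is the part requiring the most care; everything else rests routinely on Theorem~\ref{th:global_convergence} and Theorem~1 of \cite{attackingV}.
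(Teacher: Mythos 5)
Your proposal is correct and follows essentially the same route as the paper, which dispenses with an explicit proof and simply asserts that the corollary follows from the adversarial-resilience result of centralized $\ampc$ in \cite{attackingV} combined with Theorem~\ref{th:global_convergence}; your reduction of each perturbation to a disturbed ``initial'' state and the subsequent transfer to $\distributedampc$ is exactly that argument, spelled out. Your added caution about case (c) requiring the game-theoretic reading of the premise is a reasonable refinement that the paper itself leaves implicit.
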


\section{Experimental Results}
\label{sec:results}

We comprehensively evaluated $\distributedampc$ to compute statistical estimates of the success rate of reaching a V-formation from an arbitrary initial state in a finite number of steps $m$. 
We considered flocks of size $B=\{5,7,9\}$ birds. The specific reachability problem we addressed is as follows. 
Given a flock MDP $\M$ with $B$ birds and the randomized strategy $\sigma: S\,{\mapsto}\,\PD(A)$ of Alg.~\ref{alg:distributed}, estimate the probability of reaching a state $s$ where the cost function $J(s)\,{\leqslant}\,\varphi$, starting from an initial state in the underlying Markov chain $\M_{\sigma}$ induced by $\sigma$ on $\M$.

Since the exact solution to this stochastic reachability problem is intractable (infinite/continuous state and action spaces), we solve it approximately using statistical model checking (SMC). In particular, as the probability estimate of reaching a V-formation under our algorithm is relatively high, we can safely employ the {\em additive error} $(\varepsilon,\delta)$-Monte-Carlo-approximation scheme from~\cite{grosu2014isola}. This requires $L$ i.i.d.\ executions (up to a maximum time horizon), determining in $Z_l$ if execution $l$ reaches a V-formation, and returning the mean of the random variables $Z_1,\ldots,Z_L$. 
Recalling~\cite{grosu2014isola}, we compute $\widetilde{\mu}_Z\,{=}\,\sum_{l=1}^LZ_l/L$ by using Bernstein's inequality to fix $L{\propto}\,ln(1/\delta)/\varepsilon^2$ and obtain
$
\mathbb{P}[\mu_Z\,{-}\,\varepsilon\leq\widetilde{\mu}_Z\leq\mu_Z\,{+}\,\varepsilon]\geq{}1\,{-}\,\delta,
$
where $\widetilde{\mu}_Z$ approximates $\mu_Z$ with additive error $\varepsilon$ and probability $1\,{-}\,\delta$. 
In particular, we are interested in a Bernoulli random variable $Z$ returning 1 if the cost $J(s)$ is less than $\varphi$ and 0 otherwise. In this case, we can use the Chernoff-Hoeffding instantiation of the Bernstein's inequality~\cite{grosu2014isola}, and further fix the proportionality constant to $N\,{=}\,4\,ln(2/\delta)/\varepsilon$. 
Executing the algorithm $10^3$ times for each flock size gives us a confidence ratio $\delta\,{=}\,0.05$ and an additive error of $\varepsilon\,{=}\,10^{-2}$.

We used the following parameters: number of birds $B\,\in\,\{5,7,9\}$, cost threshold $\varphi\,{=}\,10^{-1}$, maximum horizon $h_{max}\,{=}\,3$, number of particles in \texttt{PSO} $p\,{=}\,200{\cdot}h{\cdot}B$. $\distributedampc$ is allowed to run for a maximum of $ m\,{=}\,60$ steps. The initial configurations are generated independently, uniformly at random, subject to the following constraints on the initial positions and velocities: $\forall\:i\in\{1,\ldots,B\}\:\xv_i(0)\in[0,3]\times[0,3]$ and $\vv_i(0)\in[0.25,0.75]\times[0.25,0.75]$.
To perform the SMC evaluation of $\distributedampc$, and to compare it with the centralized $\ampc$ from~\cite{attackingV}, we designed the above experiments for both algorithms in C, and ran them on the 2x Intel Xeon E5-2660 Okto-Core, 2.2 GHz, 64 GB platform. 



Our experimental results are given in Table~\ref{tab:improv}. We used three different ways of computing the average number of neighbors for successful runs. Assuming a successful run converges after $m'$ steps, we (1)~compute the average over the first $m'$ steps, reported as ``for good runs until convergence''; (2)~extend the partial $m'$-step run into a full $m$-step run and compute the average over all $m$ steps, reported as ``for good runs over $m$ steps''; or (3)~take an average across $>m$ steps, reported as ``for good runs after convergence'', to illustrate global stability.

\begin{table}[t]
	\scriptsize
    \centering
    \caption{Comparison of DAMPC and AMPC~\cite{attackingV} on $10^3$ runs.}
    \begin{tabular}{llcccccccc}
    	\toprule
        
		&& \multicolumn{3}{c}{{$\distributedampc$}} & 
		\multicolumn{3}{c}{{\textsc{$\ampc$}}} \\
		\cmidrule(l){3-5}\cmidrule(l){6-8}
		\textsc{Number of Birds}~~~~~~~ && {\centering\textsc{5}} & {\textsc{7}}& {\centering\textsc{9}}
		& {\textsc{5}} & {\textsc{7}} & {\textsc{9}} \\
		
        \midrule
        Success rate, $\widetilde{\mu}_Z$ &&
		$0.98$ & $0.92$ & $0.80$ & $0.99$ & $0.95$ & $0.88$\\
        Avg. convergence duration, $m$ &&
        $7.40$ & $10.15$ & $15.65$ & $9.01$ & $12.39$ & $17.29$\\
        Avg. horizon, $h$ &&
        $1.17$ & $1.36$ & $1.53$ & $1.27$ & $1.55$ & $1.79$\\
        Avg. execution time in sec. &&
        $295s$ & $974s$ & ${>}10^3s$ & $644s$ & $3120s$ & ${>}10^4s$\\
        
        \midrule
        Avg. neighborhood size, $k$ &&&&&&&&\\
        \midrule
        for good runs until convergence
        &&
        $3.69$ & $5.32$ & $6.35$ & $5.00$ & $7.00$ & $9.00$\\
        for good runs over $m$ steps 
        &&
        $3.35$ & $4.86$ & $5.58$ & $5.00$ & $7.00$ & $9.00$\\
        for good runs after convergence
        &&
        $4.06$ & $5.79$ & $6.75$ & $5.00$ & $7.00$ & $9.00$\\
        for bad runs &&
        $4.74$ & $6.43$ & $6.99$ & $5.00$ & $7.00$ & $9.00$\\
    
        \bottomrule
\end{tabular}
\label{tab:improv}
\end{table}

We obtain a high success rate for~5 and~7 birds, which does not drop significantly for 9 birds. The average convergence duration, horizon, and neighbors, respectively, increase monotonically when we consider more birds, as one would expect. The average neighborhood size is smaller than the number of birds, indicating that we improve over $\ampc$~\cite{attackingV} where all birds need to be considered for synthesizing the next action. 
%
We also observe that the average number of neighbors for good runs until convergence is larger than the one for bad runs, except for 5~birds. 
The reason is that in some bad runs the cost drops quickly to a small value resulting in a small neighborhood size, but gets stuck in a local minimum (e.g., the flock separates into two groups) due to the limitations imposed by fixing the parameters $h_{max}$, $p$, and $m$. The neighborhood size remains small for the rest of the run leading to a smaller average.

Finally, compared to the centralized $\ampc$~\cite{attackingV}, $\distributedampc$ is faster (e.g., two times faster for 5 birds). Our algorithm takes fewer steps to converge. The average horizon of $\distributedampc$ is smaller. The smaller horizon and neighborhood sizes, respectively, allow \texttt{PSO} to speed up its computation.

\section{Conclusions}
\label{sec:conclusion}

We introduced $\distributedampc$, a distributed adaptive-neighborhood, adaptive-horizon model-predictive control algorithm, that synthesizes actions for a controllable Markov decision process (MDP), such that the MDP eventually reaches a state with cost close to zero, provided that the MDP has such a state.

%
The main contribution of $\distributedampc$ 
is that it adaptively resizes an agent's local neighborhood, while still managing to converge to a goal state with high probability. 
Initially, when the cost value is large, the neighborhood of an agent is the entire multi-agent system. As the cost decreases, however, the neighborhood is resized to smaller values. Eventually, when the system reaches a goal state, the neighborhood size remains around a pre-defined minimal value. 

This is a remarkable result showing that the local information needed to converge is strongly related to a cost-based Lyapunov function evaluated over a global system state. While our experiments were restricted to V-formation in bird flocks, our approach applies to reachability problems for any collection of entities that seek convergence from an arbitrary initial state to a desired goal state, where a notion of distance to it can be suitably defined.

\bibliographystyle{splncs03}
\bibliography{dampc.bib}

\section*{Appendix}
\label{sec:appendix2}

\begin{figure}[t]
\centering	
\includegraphics[width=.28\textwidth]{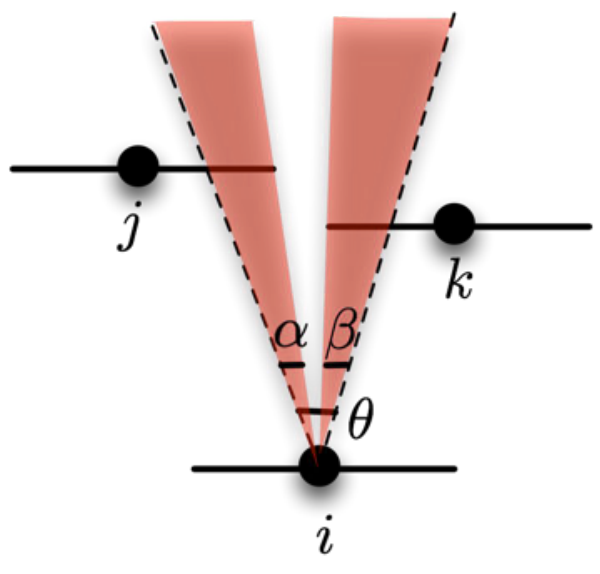}
\includegraphics[width=.3\textwidth]{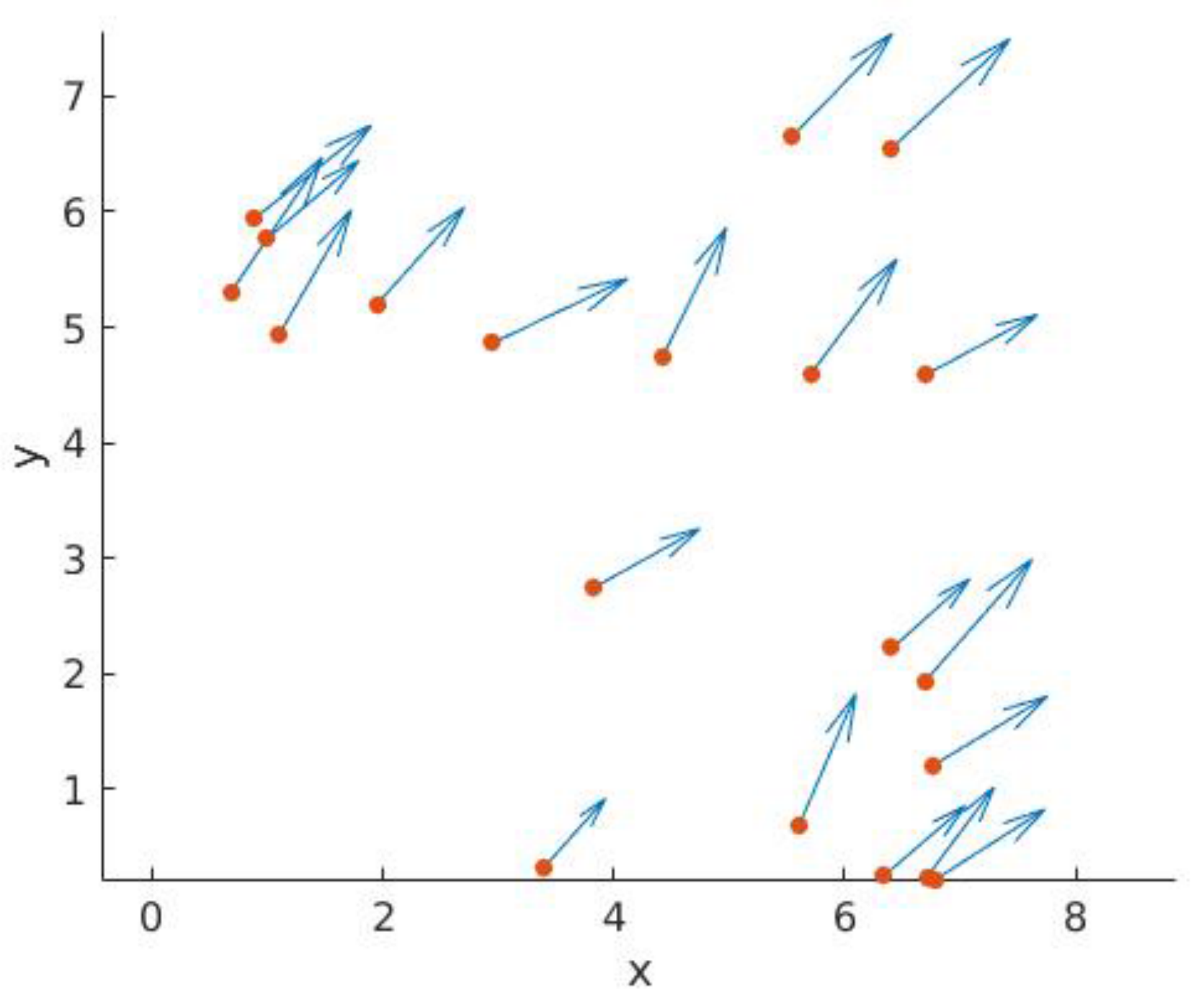}
\includegraphics[width=.36\textwidth]{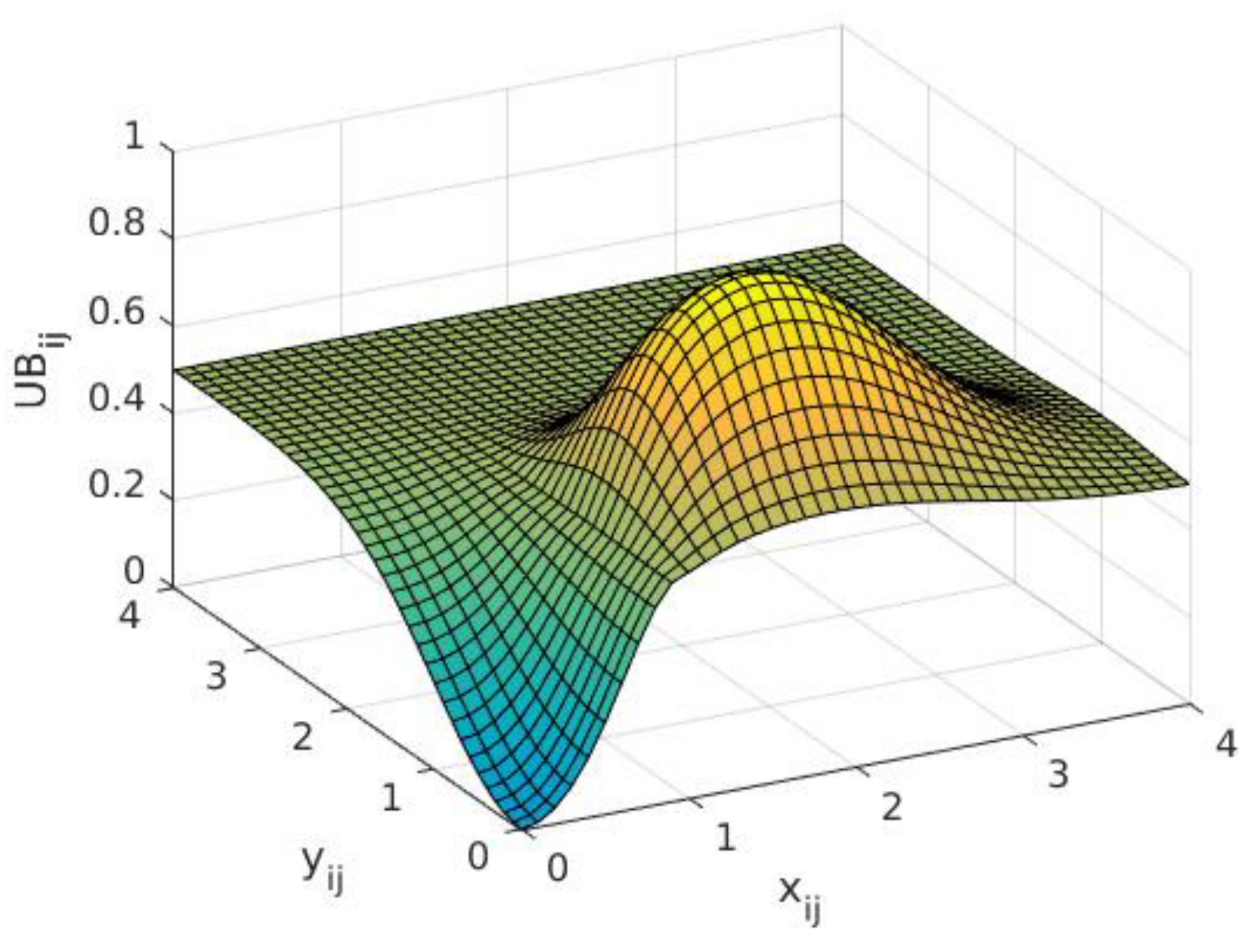}
\caption{
Illustration of the clear view ($\CV$), velocity matching  
($\VM$), and upwash benefit ($\UB$) metrics. 
Left: Bird $i$'s view is partially blocked by birds $j$ 
and $k$. Hence, its clear view is $\CV\,{=}\,(\alpha\,{+}\,\beta)/\theta$. 
Middle: A flock and its unaligned bird velocities results in a 
velocity-matching metric $\VM\,{=}\,6.2805$.  In contrast, $\VM\,{=}\,0$
when the velocities of all birds are aligned. 
Right: Illustration of the (right-wing) upwash benefit that bird $i$ receives from bird $j$ depending on how it is positioned behind bird $j$.  Note that bird $j$'s downwash region is directly behind it.
}
\label{fig:fitness}
\end{figure}
\paragraph{Clear View.} This metric is defined by accumulating the percentage of a cone with angle $\theta$, blocked by other birds: 
\begin{align*}
B_{ij}(h_{ij},v_{ij}) = \begin{cases}
   \left\{\alpha\:|\:\max\left(\frac{\pi-\theta}{2},\atan\left(\frac{v_{ij}}{h_{ij}+w}\right)\right)\leqslant\alpha\leqslant\right.\\
   \hspace*{3cm}\left.\leqslant\min\left(\frac{\pi+\theta}{2},\atan\left(\frac{v_{ij}}{h_{ij}-w}\right)\right)\right\}\\
   \hspace*{1cm}\text{if } \left(h_{ij}<w\vee\frac{h_ij-w}{v_{ij}}<\tan\theta\wedge\mathtt{Front}(j,i)\right), \\
   0 \hspace*{8mm}\text{otherwise},
  \end{cases}
\end{align*}
where $w$ represents the bird's wing span. $h_{ij}$ and $v_{ij}$ are the horizontal and vertical distances between birds $i$ and $j$, respectively, w.r.t the bird $i$'s direction. $\mathtt{Front}(j,i)=1$ when the bird $j$ is in front of the bird $i$. $\CV(\vs)$ for flock state $s$ is the sum of the clear-view metric of all birds: $\sum_i\frac{|\bigcup_{j\neq i}B_{ij}(h_{ij},v_{ij})|}{\theta}$. The minimum value is $\CV^*{=}\,0$ and attained in a perfect V-formation where all birds have an unobstructed view.

\paragraph{Velocity Matching.} $\VM(\vs)$ for flock state $\vs$ is defined as the difference between the velocity of a given bird and all other birds, summed up over all birds in the flock: $\VM(\vv)=\sum_{i>j}\left(\frac{\|v_i-v_j\|}{\|v_i\|+\|v_j\|}\right)^2$, where $v_i$ is the bird $i$'s velocity. The minimum value is $\VM^*{=}\,0$ and attained in a perfect V-formation where all birds have the same velocity.

\paragraph{Upwash Benefit.} The trailing upwash is generated near the wingtips of a bird, while downwash is in the center of a bird. An upwash measure $um$ is defined on the 2D space using a Gaussian-like model that peaks at the appropriate upwash and downwash regions: $\UB_{ij}(h_{ij},v_{ij}) =$
\begin{align*}
\begin{cases}
   \frac{v_i\cdot v_j}{\|v_i\|\cdot\|v_j\|}S(h_{ij})\cdot G(h_{ij},v_{ij},\mu_1,\Sigma_1) &\text{if } h_{ij}\geqslant\frac{(4-\pi)w}{8}\wedge\mathtt{Front}(j,i),\\
   S(h_{ij})\cdot G(h_{ij},v_{ij},\mu_2,\Sigma_2) &\text{if } h_{ij}<\frac{(4-\pi)w}{8}\wedge\mathtt{Front}(j,i),\\
   0 &\text{otherwise},
  \end{cases}
\end{align*}
\[S(h_{ij})=\mathtt{erf}\left(2\sqrt[2]{2}\left(h_{ij}-(4-\pi)w/8\right)\right),\]
\[\:G(h_{ij},v_{ij},\mu,\Sigma)=e^{-0.5([h_{ij},v_{ij}]-\mu)^T\Sigma^{-1}([h_{ij},v_{ij}]-\mu)},\]
where $h_{ij}=(4-\pi)w/8$ separates the upwash and downwash regions, $S(h_{ij})$ is a smoothing function with $\mathtt{erf}$ representing the error, and $G(h_{ij},v_{ij},\mu,\Sigma)$ is a Gaussian-like function. Means $\mu_1,\mu_2$ are chosen to maximize the upwash benefit at $[(12+\pi)w/16,1]$, and minimize it at $[0,0]$.
For bird~$i$ with upwash $um_i=\min(\sum_j\UB_{ij}(h_{ij},v_{ij},1))$, the upwash-benefit metric $\UB_i$ is $1\,{-}um_i$, and $\UB(\vs)$ for flock state $s$ is the sum of $\UB_i$ for $1\leqslant i \leqslant B$. The upwash benefit $\UB(\vs)$ in V-formation is $\UB^*\,{=}\,1$, as all birds, except for the leader, have minimum upwash-benefit metric ($\UB_i=0, um_i=1$), while the leader has an upwash-benefit metric of $1$ ($\UB_i=1, um_i=0$).

\end{document}